\documentclass[11pt]{article}
\usepackage{a4wide}
\usepackage{amsmath}
\usepackage{amsthm}


\newtheorem{theorem}{Theorem}
\newtheorem{lemma}{Lemma}

\theoremstyle{remark}
\newtheorem{remark}{Remark}
\theoremstyle{definition}

\newcommand{\depth}[1]{\mathrm{Depth}(#1)}
\newcommand{\Depth}[2]{\mathrm{Depth}_{#2}(#1)}
\newcommand{\clique}[1]{\mathrm{CLIQUE}(#1)}
\newcommand{\iclique}[1]{\mathrm{CLIQUE}[#1]}
\newcommand{\mcliq}[1]{\kappa(#1)}
\newcommand{\blog}{\log }
\newcommand{\Cliq}[1]{\mathrm{Cliq}(#1)}
\newcommand{\Cl}[1]{\mathrm{Cl}(#1)}
\newcommand{\thr}[2]{\mathrm{Th}_{#1}^{#2}}
\newcommand{\is}[1]{\omega(#1)}
\newcommand{\isb}[1]{\omega_{\mathrm{b}}(#1)}
\newcommand{\cc}[1]{\mathrm{c}(#1)}
\newcommand{\ccb}[1]{\mathrm{c}_{\mathrm{b}}(#1)}
\newcommand{\Circ}{C}

\begin{document}

\title{Clique Problem, Cutting Plane Proofs and
       Communication Complexity}

\author{S.~Jukna\thanks{Research supported by the DFG
    grant SCHN~503/5-1.}\thanks{University of Frankfurt,
    Institute of Computer Science,
    D-60054 Frankfurt, Germany. Permanent affiliation: Vilnius University,
    Institute of Mathematics and informatics,
    Vilnius, Lithuania. Email: jukna@thi.informatik.uni-frankfurt.de}}
\date{}
\maketitle

  \begin{abstract}
    Motivated by its relation to the length of cutting plane proofs
    for the Maximum Biclique problem, we consider the following
    communication game on a given graph $G$, known to both
    players. Let $K$ be the maximal number of vertices in a complete
    bipartite subgraph of $G$, which is not necessarily an
    induced subgraph if $G$ is not bipartite.
    Alice gets a set $a$ of vertices, and
    Bob gets  a disjoint set $b$ of vertices such that $|a|+|b|>K$.
    The goal is to find a nonedge of $G$ between $a$ and $b$.  We show
    that $O(\log n)$ bits of
    communication are enough for every $n$-vertex graph.
  \end{abstract}

 \section{Introduction}

Let $G=(V,E)$ be a graph with vertex set $V$ and edge set $E$.  A
\emph{clique} in $G$ is a set $a\subseteq V$ of vertices such that
$\{u,v\}\in E$ for all $u\neq v\in a$.  A \emph{biclique} in $G$ is a
pair $\{a,b\}$ of disjoint subsets of vertices such that $\{u,v\}\in
E$ for all $u\in a$ and $v\in b$. Thus, the edges $\{u,v\}$ form a
complete bipartite subgraph of $G$ (which is not necessarily an
\emph{induced} subgraph if $G$ is not bipartite).  The \emph{size} of
a clique (or biclique) is the number of its vertices.  The maximum
size of a clique in $G$ is denoted by $\is{G}$, and the maximum size
of a biclique in $G$ is denoted by $\isb{G}$.  Note that $\is{G}\leq
\isb{G}$ holds for every graph $G$: every clique of size $k$ contains
a biclique (in fact, many bicliques) of size~$k$.  A \emph{nonedge} in
a graph is a pair of its nonadjacent vertices.

Given an arbitrary (not necessarily bipartite) graph $G=(V,E)$, we are
interested in the communication complexity of the following game
between two players, Alice and Bob.
\begin{itemize}
\item[] {\bf Biclique Game on} $G=(V,E)$:\\
  Alice gets $a\subseteq V$, Bob gets $b\subseteq V$ such that $a\cap
  b=\emptyset$ and $|a|+|b|>\isb{G}$.  The goal is to find a nonedge
  of $G$ lying between $a$ and $b$. This nonedge must be known to both
  players.
\end{itemize}
If the underlying graph $G$ is bipartite with a bipartition $V=V_1\cup
V_2$, then we additionally require that $a\subseteq V_1$ and
$b\subseteq V_2$.  Note that the promise $|a|+|b|>\isb{G}$ ensures
that there must be at least one nonedge between $a$ and $b$. The communication complexity,
$\ccb{G}$, of this game is the minimum, over all (deterministic)
communication protocols for $G$, of the number of bits communicated on
a worst-case input $(a,b)$. We stress that the graph $G$ in this game
is \emph{fixed} and is known to both players. The players are not
adversaries---they help and trust each other. The difficulty, however,
is that Alice cannot see Bob's set $b$, and Bob cannot see Alice's
set~$a$.

To avoid trivialities, we will assume (without mentioning this) that
our graphs have no complete stars, that is, vertices adjacent to all
remaining vertices---such vertices can be ignored.

\begin{itemize}
\item[] {\bf Clique Game on} $G=(V,E)$:\\
  Alice gets a set $a\subseteq V$ on vertices, Bob gets a set
  $b\subseteq V$ of vertices such that $a\cap b=\emptyset$ and
  $|a|+|b|>\is{G}$.  The goal is to find a nonedge of $G$ lying within
  $a\cup b$. Again, this nonedge must be known to both players.
\end{itemize}
Let $\cc{G}$ denote the communication complexity of the clique game
on~$G$.
\begin{remark}\label{rem:1}
  The main difference from the biclique game is that now we have a
  weaker promise $|a|+|b|>\is{G}$. Note also that the only nontrivial
  inputs are pairs $(a,b)$, where both $a$ and $b$ are cliques: the
  found nonedge must then lie \emph{between} $a$ and $b$ (as in the
  biclique game). Indeed, if one of the sets, say, $a$ is not a
  clique, then it contains a nonedge. Alice can then send both
  endpoints of this nonedge to Bob using at most $2\lceil \log_2
  n\rceil$ bits, and the game is over.
\end{remark}

Our motivation to consider clique and biclique games cames 
 from their connection to the length of so-called ``tree like''
cutting plane proofs for the Maximum Clique problem on a fixed graph
$G=(V,E)$. Cliques in $G$ are exactly the $0$-$1$ solutions of the
system $\Cl{G}$ consisting of linear inequalities $x_u+x_v\leq 1$ for
all nonedges $\{u,v\}\not\in E$, and $x_v\geq 0$ for all vertices
$v\in V$. If the graph is bipartite with bipartition $V=V_1\cup V_2$,
then we only have inequalities $x_u+x_v\leq 1$ for all nonedges $\{u,v\}$
with $u\in V_1$ and $v\in V_2$. 
In the ``find a hurt axiom'' game, given a $0$-$1$
assignment $\alpha$ to the variables such that $\sum_{v\in
  V}\alpha_v\geq \is{G}+1$, we (the adversary) first split the bits of
$\alpha$ between Alice and Bob, and their goal is to find a nonedge
$\{u,v\}$ such that $\alpha_u=\alpha_v=1$. In the bipartite case,
the promise is $\sum_{v\in V}\alpha_v\geq \isb{G}+1$.

Results of \cite{impagliazzo} imply that, if a clique or biclique
game requires $K$ bits of communication, then every tree-like cutting
planes proof of the $0$-$1$ unsatisfiability of the system $\Cl{G}$
augmented by the inequality $\sum_{v\in V}x_v\geq \is{G}+1$ must
either use super-polynomially large coefficients, or must produce at
least $2^{\Omega(K/\log n)}$ inequalities; see \cite[Section~19.3
and Research Problem~19.12]{BOOK} for details.  It was therefore a
hope that $n$-vertex graphs $G$ with $\ccb{G}\gg \log^2 n$ or at least
$\cc{G}\gg \log^2 n$ exist.

Our main result (Theorem~\ref{thm:1} below) destroys the first hope:
for every (not necessarily bipartite) $n$-vertex graph,
$\ccb{G}=O(\log n)$ bits of communication are enough.

Since the found nonedge must be known to both players, at least $\blog
n$ bits of communication are necessary for any non-trivial graph on
$n$-vertices.  However, if the graph is complicated enough, then
(intuitively) this trivial number of bits should not be sufficient.
If, say, there are many nonedges leaving the sets $a$ and $b$, but
only one of them lies between $a$ and $b$, how should the players
quickly localize this unique nonedge? 

It turns out that, somewhat surprisingly, a logarithmic number of bits
is sufficient for \emph{any} graph! That is, up to constant factors,
the communication complexity of the biclique game does not depend on
the structure of the underlying graph.

\begin{theorem}\label{thm:1}
  For every $n$-vertex graph $G$, we have $\ccb{G}\leq 7.3\blog
  n+O(1)$.
\end{theorem}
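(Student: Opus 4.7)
My plan is to build the protocol in two phases: a preliminary parameter-exchange phase and a main ``vertex-exchange'' refinement phase, both operating within $O(\log n)$ total bits. In the preliminary phase Alice and Bob exchange the cardinalities $\alpha=|a|$ and $\beta=|b|$ using $2\lceil\log_2 n\rceil$ bits, so that both players share these parameters and know that the promise $\alpha+\beta>\isb{G}$ is in force.

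In the main phase, the players maintain candidate sets $X\subseteq V\setminus a$ for $b$ and $Y\subseteq V\setminus b$ for $a$, starting from $X=V\setminus a$ and $Y=V\setminus b$. On each turn the active player sends one vertex, chosen greedily: for example, Alice sends $u\in a$ minimising $|N(u)\cap X|$. If $b\not\subseteq N(u)$, Bob sends a witness $v\in b\setminus N(u)$ and the protocol halts. Otherwise, Bob sends a ``no'' bit, both players deduce $b\subseteq N(u)$ and update $X\leftarrow X\cap N(u)$, and the roles swap. Each exchange costs $O(\log n)$ bits, so if the number of rounds can be bounded by a constant the proof is complete.

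The main obstacle is precisely this constant-round bound. The naive analysis only gives that $(a,X)$ cannot be a biclique by the promise, hence $|X|$ drops by at least one per failed round; this yields only an $O(n\log n)$ protocol. To achieve $7.3\log n+O(1)$, a more global argument is needed, exploiting the structure of the accumulated vertices $u_1,\ldots,u_k\in a$ and $v_1,\ldots,v_k\in b$ from the failed rounds. By construction every $u_i$ is adjacent to all of $b$ and every $v_j$ to all of $a$, so $(\{u_1,\ldots,u_k\},\{v_1,\ldots,v_k\})$ already forms a biclique; combined with the shrinkage of $X$ and $Y$ under the greedy rule, after a constant number of rounds this biclique must be extendable beyond $\isb{G}$ vertices, contradicting the promise. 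I also anticipate needing an ``endgame'' step: once one candidate set (say $X$) is shrunk to at most $O(n^c)$ vertices for some small $c$, its owner can transmit enough extra information to pin down the nonedge directly in $O(\log n)$ bits.

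The delicate part will be turning this intuition into a quantitative amortised/potential argument and tuning the greedy rule (or replacing it by a more refined non-greedy selection) so that the analysis balances out to the specific constant $7.3$. I expect the $7.3$ to arise from optimising how aggressively each failed round shrinks $X$ or $Y$ against how many rounds are then needed to either find the nonedge or violate $\isb{G}$.
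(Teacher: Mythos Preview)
Your plan has a genuine gap: the constant-round bound you hope for does not hold, and no refinement of the greedy rule will rescue it.  Take $G=K_{n,n}$ minus a perfect matching, so $V_1=V_2=[n]$ and the nonedges are exactly the pairs $(i,i)$.  Here $\isb{G}=n$, and the biclique game becomes: given $a\subseteq[n]$ and $b\subseteq[n]$ with $|a|+|b|>n$, output some $i\in a\cap b$.  Every vertex $u\in a$ has $|N(u)\cap X|=|X|-1$ whenever the index $u$ still lies in $X$, so your greedy choice is a tie and each failed round shrinks $X$ (or $Y$) by exactly one.  With $a=\{1,\dots,k\}$ and $b=\{k,\dots,n\}$ the unique answer is $k$, and the protocol needs $\Theta(n)$ rounds, hence $\Theta(n\log n)$ bits.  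Your biclique observation gives nothing better: the bound it yields on failed Alice-rounds is $\isb{G}-\beta$, which here is $k-1$.  The ``endgame'' idea does not help either, since $X$ only reaches size $n^c$ after $\Theta(n)$ rounds.

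The paper's proof is of an entirely different nature and does not exchange vertices at all.  One associates to each input a $0$--$1$ vector indexed by the \emph{nonedges} of $G$ ($p_a$ is the incidence vector of nonedges touching $a$, and $q_b$ its complement for $b$), and observes that a single monotone Boolean function $f_k$ separates all $p_a$ with $|a|=k$ from all $q_b$ with $|b|>\isb{G}-k$.  This $f_k$ turns out to be the threshold function $\thr{k}{n}$ applied to the $n$ big ANDs $M_v=\bigwedge_{e\ni v}x_e$.  By Valiant's theorem $\thr{k}{n}$ has a monotone circuit of depth $5.3\log n+O(1)$; adding $\log n$ for the AND layer gives a depth-$(6.3\log n+O(1))$ circuit for $f_k$.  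Alice first sends $k=|a|$ ($\log n$ bits), and then the players perform a Karchmer--Wigderson traversal of this fixed circuit, one bit per gate, until they reach an input variable $x_e$; that $e$ is the desired nonedge.  The constant $7.3$ is thus $1+1+5.3$, coming from Valiant's bound and the two $\log n$ layers, not from any round-count optimisation.
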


The situation with the clique game is more complicated. Here we are
only able to show that $O(\log n)$ bits are enough for many graphs.
Interestingly, the clique game is related to the monotone complexity
of the following decision problem.

The \emph{induced $k$-clique function} of an $n$-vertex graph $G$ is a
monotone boolean function of $n$ variables which, given a subset of
vertices, outputs $1$ if and only if some $k$ of these vertices form a
clique in~$G$.  Thus, this function is just a version of the
well-known {\bf NP}-complete Clique function restricted to only
\emph{spanning} subgraphs of one \emph{fixed} graph $G$.  Let
$\depth{G}$ denote the maximum, over all integers $1\leq k\leq n$, of
the minimum depth of a monotone circuit with fanin-$2$ AND and OR
gates computing the induced $k$-clique function of~$G$.

\begin{theorem}\label{thm:2}
  For every $n$-vertex graph $G$, we have $\cc{G}\leq \depth{G}+2\blog
  n+O(1)$.
\end{theorem}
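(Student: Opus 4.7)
The plan is to reduce the clique game on $G$ to a Karchmer--Wigderson (K--W) game for an appropriate slice of the induced clique function of~$G$. First, by Remark~\ref{rem:1}, I spend a constant number of preamble bits plus $2\blog n$ bits to dispose of inputs where $a$ or $b$ fails to be a clique: the owner of such a set just names an internal nonedge. So I may assume both $a$ and $b$ are cliques with $|a|+|b|>\is{G}$, in which case the sought nonedge necessarily lies between them.

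In that case, Alice sends $k:=|a|$ to Bob ($\lceil\blog n\rceil$ bits) and both players prepare inputs for the K--W game on the monotone function $f_{k}$, the induced $k$-clique function of~$G$. Alice plays the indicator vector $\chi_a\in\{0,1\}^V$; this is a $1$-input of $f_k$ since $a$ is itself a clique of size~$k$. The design I propose for Bob's input is $\chi_B$, where $B:=\bigcap_{v\in b}N_G(v)$ is the common \emph{open} neighborhood of~$b$. The crucial observation, which is the heart of the proof, is that $\chi_B$ is a $0$-input of $f_k$: if some $a'\subseteq B$ were a clique of size~$k$, then $a'\cap b=\emptyset$ (because $v\notin N_G(v)$), every vertex of $a'$ would be adjacent to every vertex of~$b$, and so $a'\cup b$ would be a single clique in $G$ of size $|a|+|b|>\is{G}$, contradicting the definition of~$\is{G}$.

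Given these inputs, running the K--W protocol for $f_k$ costs at most $\depth{G}$ bits (using the standard fact that the K--W communication complexity of a monotone function equals its monotone formula depth, and that monotone formula and monotone circuit depths coincide), and outputs a vertex $u$ with $u\in a$ and $u\notin B$. By the definition of~$B$, some $v\in b$ is non-adjacent to~$u$; and $u\neq v$ since $a\cap b=\emptyset$, so $\{u,v\}$ is a nonedge lying between $a$ and~$b$. Bob locates any such $v$ and sends it to Alice using $\lceil\blog n\rceil$ bits, after which both players know the nonedge. Summing the costs gives $\depth{G}+2\blog n+O(1)$ as claimed. The only genuinely non-routine step is the construction of Bob's derived set $B$ and the cliqueness argument certifying that $\chi_B$ is a $0$-input; everything else is packaging of the K--W theorem together with the non-clique reduction from Remark~\ref{rem:1}.
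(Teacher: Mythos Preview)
Your proof is correct. The route differs from the paper's in packaging, though the core idea is the same. The paper proves Lemma~\ref{lem:2}, which builds a monotone circuit of depth $\depth{G}+\blog n$ on \emph{nonedge} variables by feeding the ANDs $M_v=\bigwedge_{e\in E(v)}x_e$ into the induced $k$-clique function, and then runs Karchmer--Wigderson on this composite circuit with the nonedge vectors $p_a$ and $q_b$; the traversal terminates at a nonedge variable, so the full answer is produced directly. You instead apply K--W to the induced $k$-clique function itself, on \emph{vertex} variables, with Bob's $0$-input being $\chi_B$ for $B=\bigcap_{v\in b}N_G(v)$, and then spend a final $\lceil\blog n\rceil$ bits for Bob to name the second endpoint. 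The two are tightly related: under the paper's star-freeness assumption one checks that $(M_v(q_b))_{v\in V}=\chi_B$, so your Bob-input is exactly what the paper's bottom $M_v$-layer computes on~$q_b$, and your ``$\chi_B$ contains no $k$-clique'' argument is the vertex-side restatement of the paper's ``$g_k(q_b)=0$''. Your version is arguably more direct, invoking precisely the circuit whose depth defines $\depth{G}$; the paper's version has the cosmetic advantage of outputting the nonedge without an extra round. The $2\blog n$ overhead is identical, just allocated differently: the paper spends it as $\blog n$ for $|a|$ plus $\blog n$ of extra circuit depth, whereas you spend it as $\blog n$ for $|a|$ plus $\blog n$ for the second endpoint.
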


The measure $\depth{G}$ is related to the number $\mcliq{G}$ of
\emph{maximal} cliques in $G$; a clique is \emph{maximal}, if it
cannot be extended by adding a new vertex.  It can be shown (see
Lemma~\ref{lem:6} below) that
\begin{equation}\label{eq:cliq}
  \depth{G}\leq \blog\mcliq{G}+5.3\blog n+O(1)\,.
\end{equation}
There are many $n$-vertex graphs $G=(V,E)$ for which $\mcliq{G}$ is
polynomial in $n$. In particular, $\mcliq{G}\leq n(d/2)^{p-2}$ holds
for every $K_p$-free graph of maximal degree $d\geq 2$ \cite{prisner};
$\mcliq{G}\leq n^p$, where $p$ is the chromatic number of $G$
\cite{moon}; $\mcliq{G}\leq (|E|/p+1)^p+|E|$, where $p$ is the maximum
number of edges in an induced matching in the complement of~$G$
\cite{balas,alekseev}.  If $p=O(\log n)$ then Theorem~\ref{thm:2}
gives $\cc{G}=O(\log^2 n)$ for all such graphs, implying that
communication complexity arguments will fail for such graphs, even for
the Maximum Clique problem (not just for the Maximum Biclique
problem).

Still, it remains unknown whether $\cc{G}=O(\log^2 n)$ holds for all
graphs.  We can only show that $O(\log n)$ bits are always enough in
the following relaxed version of the clique game. This version is no
more related to cutting plane proofs, but may be of independent
interest.

A \emph{common neighbor} of a subset $b\subseteq V$ of vertices is a
vertex $v\not\in b$ which is adjacent to all vertices in $b$. Let
$\Gamma(b)$ denote the set of all such common neighbors.

\begin{itemize}
\item[] {\bf Relaxed Clique Game on} $G=(V,E)$:\\
  Alice gets a set $a\subseteq V$ on vertices, Bob gets a set
  $b\subseteq V$ of vertices such that $a\cap b=\emptyset$ and
  $|a|+|b|>\is{G}$.  The goal is to find a nonedge of $G$ which lies
  either within $a\cup b$ or between $a$ and $\Gamma(b)$.
\end{itemize}

That is, the found nonedge must lie either within $a\cup b$ (as in the
clique game) or between $a$ and~$\Gamma(b)$.

\begin{theorem}\label{thm:3}
  In the relaxed clique game, $7.3\blog n+O(1)$ bits of communication
  are enough for every $n$-vertex graph.
\end{theorem}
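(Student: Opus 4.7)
The plan is to reduce the relaxed clique game on $G$ to the biclique game of Theorem~\ref{thm:1}, exploiting the extra freedom that the target nonedge may lie between $a$ and $\Gamma(b)$ rather than only within $a\cup b$.

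First, at cost $O(\blog n)$ bits, Alice and Bob reduce to the case where both $a$ and $b$ are cliques: each sends one bit reporting this, and if the answer is ``no'' transmits the endpoints of an internal nonedge of his or her set (a valid output, since it lies inside $a\cup b$). From now on we may assume $a$ and $b$ are cliques with $|a|+|b|>\is{G}$. Next, record the structural facts: since $b$ is a clique, $(b,\Gamma(b))$ is a biclique in $G$, so $|b|+|\Gamma(b)|\leq\isb{G}$; and since $b\cup(a\cap\Gamma(b))$ is also a clique (vertices in $a\cap\Gamma(b)$ are joined to all of $b$, and $a$ itself is a clique), one has $|a\cap\Gamma(b)|+|b|\leq\is{G}$. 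Combined with $|a|+|b|>\is{G}$, this forces $|a\setminus\Gamma(b)|\geq|a|+|b|-\is{G}\geq1$, so a nonedge of $G$ between $a$ and $b\cup\Gamma(b)$ (in fact, between $a$ and $b$ alone) always exists.

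To locate such a nonedge efficiently, I would pass to the bipartite double $H$ of $G$: $H$ has parts $V_1=V_2=V$ and $\{u_1,v_2\}\in E(H)$ iff $\{u,v\}\in E(G)$. Bicliques of $H$ correspond to bicliques of $G$ of equal size, so $\isb{H}=\isb{G}$. Alice places her set $a$ on the $V_1$-side, and Bob, who can compute $\Gamma(b)$ from his input and the shared graph, places $B:=b\cup\Gamma(b)$ on the $V_2$-side; cross-side disjointness is automatic. Any nonedge of $H$ between $a$ and $B$ is a nonedge of $G$ with one endpoint in $a$ and the other in $b\cup\Gamma(b)$, hence a valid output of the relaxed clique game. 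Invoking Theorem~\ref{thm:1} on $H$ with input $(a,B)$ costs $7.3\blog(2n)+O(1)=7.3\blog n+O(1)$ bits.

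The principal obstacle is that Theorem~\ref{thm:1} requires the promise $|a|+|B|>\isb{H}=\isb{G}$, and this does not follow from $|a|+|b|>\is{G}$ in general---for example, when $G$ is a large complete bipartite graph and $b$ is a single vertex, one has $|a|+|b|+|\Gamma(b)|\leq\isb{G}$. Closing this promise gap is the technical heart of the argument, and my approach is a case split. If $|a|+|B|>\isb{G}$ then Theorem~\ref{thm:1} applies directly. Otherwise the failure of this inequality, combined with $|a|+|b|>\is{G}$, pins $|a|$ down in terms of $|b|$, $|\Gamma(b)|$, and the gap $\isb{G}-\is{G}$; in this regime a shorter direct protocol---for instance one in which Alice transmits a carefully chosen canonical representative of $a$ while Bob confirms its membership in $\Gamma(b)$---should exhibit the guaranteed nonedge within $O(\blog n)$ bits and absorb into the additive $O(1)$ term, so that the overall constant remains $7.3$.
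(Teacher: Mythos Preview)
Your reduction to the biclique game on the bipartite double is a natural idea, but the proof is incomplete precisely where you yourself flag it: the ``otherwise'' branch of the case split is not a protocol, only a hope. You correctly observe that $a\setminus\Gamma(b)\neq\emptyset$, but Alice does not know which of her vertices lie outside $\Gamma(b)$; her ``carefully chosen canonical representative'' of $a$ may well land in $a\cap\Gamma(b)$, and such a vertex can be adjacent to all of $b\cup\Gamma(b)$, in which case Bob has no nonedge to return. Iterating the probe over the vertices of $a$ costs $\Theta(|a|\log n)$ bits, and nothing in the failed promise $|a|+|b|+|\Gamma(b)|\le\isb{G}$ forces $|a|$ to be bounded. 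In effect you have reduced the residual case to ``Alice holds $a$, Bob holds $\Gamma(b)$, find an element of $a\setminus\Gamma(b)$ knowing only that this set is nonempty'', and that search problem has no evident $O(\log n)$ protocol.

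The paper's route is different and avoids the case split entirely. Rather than changing the graph, it changes Bob's encoding: set $q_b'(e)=0$ iff $e\in E(b)$ \emph{or} both endpoints of $e$ lie in $\Gamma(b)$. Then the same function $f_k$ from Lemma~\ref{lem:1} (the OR of $M_c$ over \emph{all} $k$-subsets $c\subseteq V$, not just cliques) already separates $p_a$ from $q_b'$ whenever $a,b$ are cliques with $|a|=k$ and $|b|>\is{G}-k$. The key observation is that if $E(c)\cap E(b)=\emptyset$ then $c\subseteq\Gamma(b)$, and since $|c|+|b|>\is{G}$ with $b$ a clique, some nonedge must lie inside $c$; both its endpoints are in $\Gamma(b)$, so $q_b'$ zeroes it and $M_c(q_b')=0$. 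Traversing the depth-$(6.3\log n+O(1))$ circuit for $f_k$ after Alice sends $k$ then gives $7.3\log n+O(1)$ directly.
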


\section{The biclique game: proof of Theorem~\ref{thm:1}}

Let $G=(V,F)$ be a graph on $|V|=n$ vertices with edge set $F$.
Inputs to the biclique game on $G$ are pairs $(a,b)$ of disjoint
subsets of vertices such that $|a|+|b|>\isb{G}$.  Hence, there must be
at least one nonedge lying between $a$ and $b$.  The goal is to find
such a ``crossing'' nonedge.

To solve this task, let $E:=\tbinom{V}{2}\setminus F$ be the set of
all nonedges of~$G$, and take a set $X=\{x_e\colon e\in E\}$ of
boolean variables, one for each nonedge. Say that a nonedge $e$ is
\emph{incident} with a subset $a\subseteq V$, if $e\cap
a\neq\emptyset$. For a subset $a\subseteq V$ of vertices, let
$E(a)\subseteq E$ denote the set of all nonedges incident with~$a$.
Finally, we associate with every subset $a\subseteq V$ two vectors
$p_a$ and $q_a$ in $\{0,1\}^{|E|}$ whose coordinates correspond to
nonedges $e\in E$:
\begin{itemize}
\item $p_a(e)=1$ if and only if $e\in E(a)$;
\item $q_a(e)=0$ if and only if $e\in E(a)$.
\end{itemize}
Thus, $p_a$ is the characteristic vector of $E(a)$, and 
$q_a$ is the complement of $p_a$.  Given an input $(a,b)$,
the goal in the biclique game is to find a position (a nonedge) $e$
such that $p_a(e)=1$ ($e$ is incident with $a$) and $q_b(e)=0$ ($e$ is
incident with $b$).  To do this, we will use monotone circuits for
threshold functions.  Recall that a \emph{threshold}-$k$ function
$\thr{k}{n}$ accepts a $0$-$1$ vector of length $n$ if and only if it
contains at least $k$ ones.  By a \emph{monotone circuit} we will mean
a circuit consisting of fanin-$2$ AND and OR gates; no negated
variables are allowed as inputs. The \emph{depth} of a circuit is the
length of a longest path from an input to the output gate.

\begin{theorem}[Valiant \cite{valiant}]\label{thm:val}
  Every threshold function $\thr{k}{n}$ can be computed by a monotone
  circuit of depth at most $5.3\log n+O(1)$.
\end{theorem}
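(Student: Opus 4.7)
\emph{Reduction to majority.} The plan is Valiant's probabilistic random-formula construction, preceded by a small constant-based reduction of $\thr{k}{n}$ to the majority function $\mathrm{Maj}_N$ on $N \leq 2n$ variables. For $k \leq \lceil n/2 \rceil$, I would feed the $n$ input variables together with $n-2k+1$ constant $1$'s into a formula for $\mathrm{Maj}_N$ with $N = 2(n-k)+1$; the majority then fires iff the original weight is at least $k$. The symmetric trick with $0$-constants handles $k > \lceil n/2 \rceil$. Since $N \leq 2n$ and this step incurs no extra depth, it suffices to build a monotone formula of depth $5.3\blog N + O(1)$ for $\mathrm{Maj}_N$.

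\emph{Random formula.} For $\mathrm{Maj}_N$ I would take the complete quaternary tree $T_d$ of depth $d$ whose internal nodes all compute the monotone amplifier
\[
  g(x_1, x_2, x_3, x_4) = (x_1 \vee x_2) \wedge (x_3 \vee x_4),
\]
a fanin-$2$ formula of depth $2$, so that $T_d$ has fanin-$2$ depth $2d$. I pad the input list with $M = \Theta(N)$ constant-$0$ slots and label each of the $4^d$ leaves independently with a uniformly random element of the padded list of length $N+M$. If the original input has Hamming weight $w$, then each leaf is independently $1$ with probability $p := w/(N+M)$, and a straightforward induction up the tree shows that the root outputs $1$ with probability exactly $f^{(d)}(p)$, where $f(p) = (1-(1-p)^2)^2 = p^2(2-p)^2$.

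\emph{Amplification.} The heart of the argument is the one-dimensional dynamics of $f$. Besides $0$ and $1$, the only fixed point is $p^* = (3-\sqrt{5})/2$, with derivative $f'(p^*) = 4p^* > 1$, so $f$ sits above the diagonal for $p > p^*$ and below it for $p < p^*$. Every iteration therefore pushes $p$ geometrically (by factor $\approx f'(p^*)$) away from $p^*$ until $p$ enters a fixed neighbourhood of $\{0,1\}$, where the convergence $1-f(p) \sim 2(1-p)^2$ (resp.\ $f(p) \sim 4p^2$ near $0$) becomes quadratic and finishes in $O(\blog\blog N)$ further iterations. I choose the padding length $M$ so that $\lceil N/2 \rceil / (N+M) = p^*$; inputs of weight just below or above the majority threshold then have $p$ on the correct side of $p^*$ with gap $\Theta(1/N)$, which $d \approx (\blog N)/\blog f'(p^*)$ iterations amplify to $\Theta(1)$.

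\emph{Derandomization.} Since the input weight takes only $N+1$ possible values, a union bound over these ``types'' completes the proof: choosing $d = \Theta(\blog N)$ large enough that $f^{(d)}$ is within $1/(N+1)^2$ of $\{0,1\}$ at every permissible $p$, the random formula computes $\mathrm{Maj}_N$ correctly at every Hamming weight with probability at least $1 - 1/(N+1) > 0$, so a deterministic monotone formula of fanin-$2$ depth $2d$ exists. The principal obstacle is the quantitative tuning: one must verify that the leading constant $2/\blog f'(p^*)$, plus the $O(1)$ overhead from padding and boundary convergence, stays below $5.3$, which it does comfortably for the amplifier $g$ above.
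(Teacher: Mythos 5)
The paper does not prove this statement --- it is quoted from Valiant \cite{valiant} --- and your outline follows exactly Valiant's probabilistic amplification, so the approach itself is the right one; but your derandomization step is wrong, and it is precisely the step that produces the constant $5.3$. Knowing that each leaf of the random tree is $1$ with probability $w/(N+M)$ tells you that the \emph{probability} of error on a fixed input depends only on its weight $w$, but the error \emph{events} for two inputs of the same weight are different: a fixed sample of the random formula can be correct on one input of weight $\lceil N/2\rceil$ and wrong on another. So a union bound over the $N+1$ weight classes does not yield a single formula correct on all inputs; you must union bound over all $2^N$ inputs (or, using monotonicity, over the $\binom{N}{\lceil N/2\rceil}$ inputs of the two critical weights, which is still exponential), and this forces the per-input error below $2^{-N}$ rather than $1/(N+1)^2$. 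In the quadratic regime the error roughly squares at each $g$-level, so driving it from a constant down to $2^{-N}$ costs about $\log_2 N$ further levels, i.e.\ an extra $2\blog N$ of fanin-$2$ depth --- not $O(\blog\blog N)$. Redoing your accounting, the depth is $\bigl(2/\log_2(6-2\sqrt{5})\bigr)\blog N + 2\blog N + O(1) \approx 5.27\blog N + O(1)$: the constant $5.3$ is essentially tight for this construction, and the fact that your version appeared to give $\approx 3.3\blog N$, ``comfortably below $5.3$,'' should have been the warning sign.

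A smaller slip: you cannot, and should not, choose $M$ with $\lceil N/2\rceil/(N+M)=p^*$. The fixed point $p^*=(3-\sqrt{5})/2$ is irrational, so exact equality is impossible; worse, if the critical accepted weight sat exactly at $p^*$ it would never be amplified, and the root would output $1$ with probability tending to $p^*\approx 0.38$ rather than to $1$. What you need is padding (with constant $0$'s and, if necessary, $1$'s) that places $p^*$ strictly \emph{between} the leaf-probabilities corresponding to weights $\lceil N/2\rceil-1$ and $\lceil N/2\rceil$, at distance $\Omega(1/N)$ from each; since consecutive weights differ in leaf-probability by $1/(N+M)=\Theta(1/N)$ this is achievable, but it has to be argued. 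With these two repairs your argument is Valiant's and gives the stated bound.
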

We will use this result to show that there exist at most $n$
small-depth monotone circuits such that every given pair of vectors
$(p_a,q_b)$ is separated by at least one of them. Then we use these
circuits to design the desired protocol.

\begin{lemma}\label{lem:1}
  For every $1\leq k\leq n$, there is a monotone circuit $\Circ(X)$ of
  depth at most $6.3\blog n+O(1)$ such that $\Circ(p_a)=1$ and
  $\Circ(q_b)=0$ for all subsets $a$ and $b$ of vertices of size
  $|a|=k$ and $|b| > \isb{G}-k$.
\end{lemma}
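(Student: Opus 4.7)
The plan is to build $C$ in two stages and invoke Valiant's construction at the top. For each vertex $v \in V$ I would first introduce an auxiliary OR-gate $y_v := \bigvee_{e \in E,\; v \in e} x_e$, combining all nonedge-variables incident with $v$; this OR has fan-in at most $n - 1$ and admits a balanced binary realization of depth $\blog n + O(1)$. On top of the $n$ gates $(y_1, \ldots, y_n)$ I would place the depth-$(5.3 \blog n + O(1))$ monotone circuit for $\thr{t}{n}$ given by Theorem~\ref{thm:val}, with threshold $t := n - \isb{G} + k$. The resulting circuit $C$ is monotone and has depth $6.3 \blog n + O(1)$.

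Correctness reduces to two counting statements about the gates $y_v$. Write $\Gamma(a)$ for the set of vertices adjacent to every vertex of $a$ (the common neighborhood). For an input $p_a$ with $|a| = k$, I would argue that $y_v(p_a) = 1$ whenever $v \notin \Gamma(a)$: if $v \in a$, then by the ``no complete stars'' convention $v$ has some non-neighbor $u$, and the nonedge $\{v, u\}$ witnesses $y_v(p_a) = 1$; if $v \notin a \cup \Gamma(a)$, then some $u \in a$ is not adjacent to $v$, again providing the witness. Since $a$ and $\Gamma(a)$ are disjoint with every crossing pair an edge of $G$, the pair $(a, \Gamma(a))$ is a biclique, hence $|\Gamma(a)| \le \isb{G} - k$, and at least $n - (\isb{G} - k) = t$ of the gates $y_v(p_a)$ evaluate to $1$. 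For an input $q_b$ with $|b| > \isb{G} - k$, the event $y_v(q_b) = 1$ requires the existence of a nonedge $\{v,u\}$ disjoint from $b$, which in particular forces $v \notin b$; so at most $n - |b| < t$ of the $y_v(q_b)$ are $1$. The top threshold gate then separates the two cases correctly.

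The only real idea is the two-layer design: by projecting from nonedges to vertices through the $y_v$'s one converts the problem into a clean threshold separation between weights $\ge t$ and $\le t-1$, with the critical bound $|\Gamma(a)| \le \isb{G} - k$ supplied automatically by the definition of $\isb{G}$. After this reduction, Valiant's theorem closes the argument and the depth bookkeeping is routine.
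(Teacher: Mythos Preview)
Your proof is correct and uses the same two-layer architecture as the paper: a depth-$\log n$ gate for each vertex, followed by a threshold gate supplied by Theorem~\ref{thm:val}. The difference is that the paper takes \emph{AND} gates $M_v=\bigwedge_{e\ni v}x_e$ at the bottom and threshold~$k$ at the top, whereas you take the dual choice of \emph{OR} gates $y_v=\bigvee_{e\ni v}x_e$ and threshold $t=n-\isb{G}+k$. Correspondingly, the paper's correctness hinges on the rejecting side (effectively $|\Gamma(b)|<k$, since $M_v(q_b)=1$ iff $v\in\Gamma(b)$), while yours hinges on the accepting side via $|\Gamma(a)|\le\isb{G}-k$; both bounds are the same observation that a set together with its common neighbourhood is a biclique. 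Neither construction dominates the other---they are mirror images, and the depth bookkeeping is identical.
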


\begin{proof}
  Associate with each subset $c\subseteq V$ the monomial
  \[
  M_c(X):=\bigwedge_{e\in E(c)}x_e\,,
  \]
  and let $f_k(X)$ be the OR of these monomials over all $k$-element
  subsets $c\subseteq V$. Then $f_k$ clearly accepts vector $p_a$ for
  every $k$-element subset of vertices $a$. So, let $b\subseteq V$ be
  a subset of $|b| > \isb{G}-k$ vertices.  To show that the function
  $f_k$ rejects the vector $q_b$, it is enough to show that every its
  monomial $M_c$ does this.

  \emph{Case 1}: $c\cap b=\emptyset$.  Since $|c|=k$ and $c\cap
  b=\emptyset$, our assumption $|c|+|b|>|c|+(\isb{G}-k)=\isb{G}$
  implies that there must be a nonedge between $c$ and $b$, that is, a
  nonedge $e$ in $E(c)\cap E(b)$.  But vector $q_b$ sets all variables
  $x_e$ with $e\in E(b)$ to $0$, implying that $M_c(q_b)=0$.

  \emph{Case 2}: $c\cap b\neq \emptyset$. Since we assumed that $G$
  contains no complete stars, there must be a nonedge $e$ incident to
  some vertex in $a\cap b$. So, $e\in E(c)\cap E(b)$, and we again
  obtain that $M_c(q_b)=0$.

  Thus, $f_k(p_a)=1$ and $f_k(q_b)=0$ for all disjoint subsets $a$ and
  $b$ of vertices of size $|a|=k$ and $|b| > \isb{G}-k$. It therefore
  remains to show that the function $f_k$ can be computed by a
  monotone circuit $\Circ$ of depth at most $6.3\blog n+O(1)$.

  The function $f_k$ accepts a set $E'\subseteq E$ of nonedges if and
  only if $E(c)\subseteq E'$ holds for some subset $c\subseteq V$ of
  $|c|=k$ nodes, which happens if and only if $E'$ contains at least
  $k$ of the sets $E(v)=\{e\in E\colon v\in e\}$ of nonedges incident
  to vertices $v$.  We can therefore construct a monotone circuit
  $\Circ(X)$ computing $f_{k}(X)$ as follows.

  The circuit, testing whether $E(v)\subseteq E'$, is just the AND
  $M_v(X)=\bigwedge_{e\in E(v)}x_e$ of at most $n$ variables. Thus, by
  taking the threshold-$k$ of the outputs of these ANDs, we obtain an
  unbounded fanin circuit of depth-$2$ computing $f_k$. Each $M_v$ has
  a monotone fanin-$2$ circuit of depth at most $\blog n+1$. By
  Theorem~\ref{thm:val}, the function $\thr{k}{n}$ has such a circuit
  of depth at most $5.3\blog n+O(1)$. Thus the depth of the entire
  circuit is at most $6.3\blog n+O(1)$.
\end{proof}

We can now describe our protocol for the biclique game on the graph
$G=(V,F)$. Recall that inputs to this game are pairs $(a,b)$ of
disjoint subsets of vertices such that $|a|+|b|>\isb{G}$.

Alice first uses at most $\blog n+1$ bits to communicate Bob the size
$k=|a|\leq \isb{G}$ of her set $a$; hence $|b|>\isb{G}-k$.  The
players then take a minimal monotone circuit $\Circ$ guaranteed by
Lemma~\ref{lem:1}. Hence, $\Circ(p_a)=1$ and $\Circ(q_b)=0$.  After
that they traverse (as in \cite{KW}) the circuit $\Circ$ backwards
starting at the output gate by keeping the invariant: $\Circ'(p_a)=1$
and $\Circ'(q_b)=0$ for every reached subcircuit $\Circ'$.

Namely, suppose the output gate of $\Circ$ is an AND gate, that is, we
can write $\Circ=\Circ_0 \wedge \Circ_1$.  Then Bob sends a bit $i$
corresponding to a function $\Circ_i$ such that $\Circ_i(q_b)=0$; if
both $\Circ_0(q_b)$ and $\Circ_1(q_b)$ output $0$, then Bob sends $0$.
Since $\Circ(p_a)=1$, we know that $\Circ_i(p_a)=1$.  If
$\Circ=\Circ_0 \lor \Circ_1$, then it is Alice who sends a bit $i$
corresponding to a function $\Circ_i$ such that $\Circ_i(p_a)=1$;
again, if both $\Circ_0(p_a)$ and $\Circ_1(p_a)$ output $1$, then
Alice sends $0$.  Since $\Circ(q_b)=0$, we know that $\Circ_i(q_b)=0$.

Alice and Bob repeat this process until they reach an input of the
circuit.  Since the circuit is monotone (there are no negated inputs),
this input is some variable $x_e$.  Hence, $x_e(p_a)=1$ and
$x_e(q_b)=0$. By the definition of vectors $p_a$ and $q_b$ (and since
$a\cap b=\emptyset$), this means that the nonedge $e$ lies between $a$
and $b$, as desired.

The number of communicated bits in this last step is at most the depth
$6.3\blog n+O(1)$ of the circuit $\Circ$.  Thus, the total number of
communicated bits is at most $7.3\blog n+O(1)$. This completes the
proof of Theorem~\ref{thm:1}.  \qed

\begin{remark}
  One could presume that the main reason, why the biclique game has
  small communication complexity, is just the fact that the biclique
  problem \emph{is} solvable in polynomial time via, say, the maximum
  matching algorithm. In the biclique problem, we are given a graph
  $G$ and a positive integer $K$; the goal is to decide whether $G$
  contains a biclique $a\times b$ of size $|a|+|b|\geq K$. However, it
  is known \cite{peeters} that a similar \emph{maximum edge biclique}
  problem is already {\bf NP}-complete, even for bipartite graphs.  In
  this problem, the goal is to decide whether $G$ contains a biclique
  $a\times b$ with $|a\times b|\geq K$ edges. If $G$ is a graph, in
  which every biclique has at most $K$ edges, then the corresponding
  to this latter problem game is, given two disjoint sets $a,b$ of
  vertices such that $|a\times b|>K$, to find a nonedge between $a$
  and $b$. It is easy to see that $O(\log n)$ bits of communication
  are enough also in this game. For this, it is enough just to replace
  the condition $|b|>\isb{G}-k$ in Lemma~\ref{lem:1} by the condition
  $|b|>K/k$. The rest of the proof is the same.
\end{remark}

\section{The clique game: proof of Theorem~\ref{thm:2}}

Consider the clique game for a given $n$-vertex graph
$G=(V,F)$. Inputs to this game are pairs $(a,b)$ of disjoint subsets
of vertices such that $|a|+|b|>\is{G}$, and the goal is to find a
nonedge lying within $a\cup b$.  Hence, now the promise is weaker, but
also the task is (apparently) easier: it is allowed that the found nonedge lies
within $a$ or within~$b$.

Let us first see why we cannot use the same function $f_k$ as in the
biclique game.  Recall that $f_k$ is the OR of monomials
$M_c(X)=\bigwedge_{e\in E(c)}x_e$ over \emph{all} $k$-element subsets
$c\subseteq V$.  Now, even if $b\subseteq V\setminus c$ is a clique,
the condition $|c|+|b|>\is{G}$ does not imply that $M_c(q_b)=0$. If,
for example, there are no nonedges lying between $c$ and $b$, that is,
when all nonedges in $c\cup b$ lie within the set $c$, then $q_b(e)=1$
for all nonedges $e\in E(c)$, implying that $M_c(q_b)=1$, that is, the
function $f_k$ wrongly accepts the vector~$q_b$. To get rid of this
problem, we use more complicated circuits.

\begin{lemma}\label{lem:2}
  For every $1\leq k\leq n$, there is a monotone circuit $\Circ(X)$ of
  depth at most $\depth{G}+\blog n$ such that $\Circ(p_a)=1$ and
  $\Circ(q_b)=0$ for all cliques $a$ and $b$ of size $|a|=k$ and $|b|
  > \is{G}-k$.
\end{lemma}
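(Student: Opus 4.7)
The plan is to exploit the assumption that $a$ and $b$ are cliques by composing a minimum-depth monotone circuit for the induced $k$-clique function of $G$ with short AND-gadgets that translate from the nonedge variables $X=\{x_e\}$ to a set of auxiliary vertex variables. For each vertex $v\in V$ I introduce
\[
y_v := \bigwedge_{e\in E(v)}x_e,
\]
computed by a balanced AND-tree of depth at most $\blog n+O(1)$; this is well defined since, by our standing assumption, $G$ has no complete stars and hence $E(v)\neq\emptyset$. Let $D$ be a minimum-depth monotone circuit, on vertex variables $y_v$, for the induced $k$-clique function of $G$; by definition $D$ has depth at most $\depth{G}$. Set $\Circ(X):=D(y_1,\ldots,y_n)$; this has depth at most $\depth{G}+\blog n+O(1)$, as required.

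Correctness reduces to understanding the set $U_\sigma:=\{v:y_v(\sigma)=1\}$ on the two inputs. For $\sigma=p_a$, the AND $y_v(p_a)$ equals $1$ iff every nonedge $\{v,w\}$ is incident with $a$, which is automatic as soon as $v\in a$; hence $U_{p_a}\supseteq a$. Since $a$ is itself a $k$-clique of $G$, the induced $k$-clique function fires on $y(p_a)$, and so $\Circ(p_a)=D(y(p_a))=1$. For $\sigma=q_b$, unfolding the definition shows $y_v(q_b)=1$ iff $v\notin b$ and every non-neighbor of $v$ lies outside $b$; equivalently, $v$ is adjacent to every vertex of $b$, that is, $v\in\Gamma(b)$. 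Thus $U_{q_b}\subseteq\Gamma(b)$.

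To finish, I would show that $\Gamma(b)$ contains no $k$-clique of $G$, whence $\Circ(q_b)=D(y(q_b))=0$. Indeed, if $a'\subseteq\Gamma(b)$ were such a clique, then $a'\cap b=\emptyset$ by definition of $\Gamma(b)$, and every vertex of $a'$ is adjacent to every vertex of $b$, so $a'\cup b$ would be a clique of $G$ of size $k+|b|>k+(\is{G}-k)=\is{G}$, contradicting the definition of $\is{G}$. The key point, and the reason the monomial-based construction of Lemma~\ref{lem:1} failed in the clique game, is precisely this last step: putting the \emph{induced} $k$-clique circuit on top of the AND-gadgets forces the intermediate layer on input $q_b$ to light up only common neighbors of $b$, and the clique structure of $b$ together with the promise $|a|+|b|>\is{G}$ then rules out a $k$-clique among those vertices. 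I do not anticipate further obstacles; what remains is routine bookkeeping of depths.
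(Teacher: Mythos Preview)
Your proposal is correct and is essentially the paper's own proof: the paper defines $g_k$ as the OR of the monomials $M_c(X)=\bigwedge_{e\in E(c)}x_e$ over all $k$-cliques $c$, observes that this is exactly the induced $k$-clique function of $G$ applied to the vertex-indexed ANDs $M_v=\bigwedge_{e\in E(v)}x_e$ (your $y_v$), and verifies $g_k(p_a)=1$ and $g_k(q_b)=0$ by the same case analysis you carry out via $U_{q_b}\subseteq\Gamma(b)$. Your presentation, which goes directly to the circuit and uses the monotonicity of the $k$-clique function together with the observation that $\Gamma(b)$ cannot contain a $k$-clique, is if anything a slightly cleaner packaging of the same argument.
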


\begin{proof}
  As before, associate with each subset $c\subseteq V$ the monomial
  $M_c(X):=\bigwedge_{e\in E(c)}x_e$, and let $g_k(X)$ be the OR of
  such monomials over all $k$-cliques $c\subseteq V$.  That is, we now
  take the OR only over sets $c$ containing no nonedges. Let
  $b\subseteq V$ be a clique of size $|b| > \is{G}-k$.  If $c\cap
  b\neq\emptyset$, then the star-freeness of $G$ implies $E(c)\cap
  E(b)\neq \emptyset$, and hence, also $M_c(q_b)=0$.  If $c\cap
  b=\emptyset$, then $|c|+|b|>\is{G}$ implies that there must be a
  nonedge in $c\cup b$. But since both $c$ and $b$ are cliques, this
  nonedge must lie between $c$ and $b$, that is belong to $E(c)\cap
  E(b)$, and we again obtain that $M_c(q_b)=0$. Thus, $g_k(p_a)=1$ and
  $g_k(q_b)=0$ for all cliques $a$ and $b$ of size $|a|=k$ and $|b| >
  \is{G}-k$.

  To design a monotone circuit of desired depth for the function
  $g_k$, recall that $g_k$ accepts a set $E'\subseteq E$ of nonedges
  if and only if there is a $k$-clique $c\subseteq V$ such that
  $M_v(E')=1$ for all $v\in c$.  Thus, applying the induced $k$-clique
  function of $G$ to the outputs of the monomials $M_v$, we obtain a
  monotone circuit for $g_k$ of depth at most $\depth{G}$.
\end{proof}

We can now describe our protocol for the clique game on a given graph
$G=(V,F)$. By Remark~\ref{rem:1}, we can assume that the inputs are
pairs $(a,b)$ of disjoint cliques such that $|a|+|b|>\is{G}$.  The
goal is to find a nonedge lying between $a$ and~$b$.

Using at most $\blog n+1$ bits, Alice first communicates Bob the size
$k=|a|\leq \is{G}$ of her clique $a$; hence $|b|>\is{G}-k$.  The
players then take a minimal monotone circuit $\Circ$ guaranteed by
Lemma~\ref{lem:2}.  Hence, $\Circ(p_a)=1$ and $\Circ(q_b)=0$. By
traversing this circuit, the players will find a variable $x_e$ (an
input of $\Circ$) such that $x_e(p_a)=1$ and $x_e(q_b)=0$. By the
definition of vectors $p_a$ and $q_b$ (and since $a\cap b=\emptyset$),
this means that the nonedge $e$ lies between $a$ and $b$, as desired.
\qed

We now prove the inequality \eqref{eq:cliq}.  Note that
Theorem~\ref{thm:val} states that $\depth{K_n}\leq 5.3\blog n+O(1)$.
The graph $K_n$ has only one maximal clique---the graph itself.  But
Valiant's theorem can be easily extended to graphs with a larger
number of maximal cliques. Recall that $\mcliq{G}$ denotes the number
of maximal cliques in~$G$.

\begin{lemma}\label{lem:6}
  For every $n$-vertex graph $G$, $\depth{G}\leq
  \blog\mcliq{G}+5.3\blog n+O(1)$.
\end{lemma}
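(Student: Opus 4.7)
The plan is to decompose the induced $k$-clique function of $G$ as an OR, over maximal cliques $C$ of $G$, of threshold-$k$ functions restricted to the vertices of $C$, and then bound each piece using Valiant's theorem.

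First I would make the following observation: a subset $S\subseteq V$ contains a $k$-clique of $G$ if and only if there is a maximal clique $C$ of $G$ with $|S\cap C|\geq k$. One direction is immediate; for the other, any $k$-clique in $S$ is a clique in $G$ and hence extends to some maximal clique $C$, which then intersects $S$ in at least $k$ vertices. Letting $x\in\{0,1\}^n$ be the characteristic vector of $S$, this gives the identity
\[
f_k(x) \;=\; \bigvee_{C\in\mathcal{M}(G)}\thr{k}{|C|}\bigl(x_v : v\in C\bigr),
\]
where $\mathcal{M}(G)$ is the family of maximal cliques, of which there are $\mcliq{G}$ in total.

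Next I would build the circuit. For each maximal clique $C$, Theorem~\ref{thm:val} yields a monotone fanin-$2$ circuit for $\thr{k}{|C|}$ of depth at most $5.3\blog |C| + O(1) \leq 5.3\blog n + O(1)$; its inputs are simply the variables $x_v$ for $v\in C$, so no additional depth is incurred in feeding the inputs. I would then combine the $\mcliq{G}$ outputs using a balanced OR tree of fanin-$2$ gates, which has depth $\lceil \blog \mcliq{G}\rceil$. The resulting monotone circuit computes the induced $k$-clique function of $G$ and has total depth at most
\[
\blog \mcliq{G} + 5.3\blog n + O(1).
\]
Taking the maximum over $k$ yields the claimed bound on $\depth{G}$.

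There is no real obstacle here: the only substantive step is the decomposition over maximal cliques, and everything else follows by plugging in Valiant's theorem and combining with a balanced OR tree. The small subtlety worth noting is that when $k>|C|$ the threshold $\thr{k}{|C|}$ is identically zero and the corresponding branch can be dropped, so we may assume every $C$ we use satisfies $|C|\leq n$, keeping each threshold circuit within the $5.3\blog n+O(1)$ depth bound.
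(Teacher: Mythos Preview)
Your proof is correct and follows essentially the same route as the paper: decompose the induced $k$-clique function as an OR over maximal cliques of threshold functions, invoke Valiant's theorem for each threshold, and top it off with a balanced OR tree of depth $\lceil\log\mcliq{G}\rceil$. The only cosmetic difference is that the paper writes each summand as $\thr{k}{n}(\delta_C\land x)$ (zeroing out variables outside $C$) rather than $\thr{k}{|C|}$ on the restricted variable set, and your final remark about ``$|C|\le n$'' is a bit garbled (this holds trivially; the point of dropping branches with $k>|C|$ is simply that those thresholds are identically~$0$), but none of this affects correctness.
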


\begin{proof}
  Let $G=([n],E)$ be a graph, and $\Cliq{x}$ be its induced $k$-clique
  function.  That is, $\Cliq{x}=1$ if and only if the set
  $S_x=\{i\colon x_i=1\}$ contains a $k$-clique of~$G$. Since every
  clique is contained in some maximal clique, we have that
  $\Cliq{x}=1$ if and only if $|S_x\cap C|\geq k$ for at least one
  maximal clique $C$. Thus, if $\delta_C\in\{0,1\}^n$ is the
  characteristic vector of $C$, and if $\delta_C\land x$ is a
  component-wise AND, then $\Cliq{x}=1$ if and only if
  $\thr{k}{n}(\delta_C\land x)=1$ holds for at least one maximal
  clique $C$. By taking the OR, over all $\mcliq{G}$ maximal cliques
  $C$, of monotone circuits computing the threshold functions
  $\thr{k}{n}(\delta_C\land x)$, and using Theorem~\ref{thm:val}, we
  obtain a monotone circuit of depth at most $\blog\mcliq{G}+5.3\blog
  n+O(1)$ computing~$\Cliq{x}$.
\end{proof}

\section{Relaxed clique game: proof of Theorem~\ref{thm:3}}

Let $G=(V,F)$ be a graph on $|V|=n$ vertices.  Inputs to the relaxed
clique game on $G$ are pairs $(a,b)$ of disjoint subsets of vertices
with the same promise $|a|+|b|>\is{G}$ as in the clique game. The
task, however, is easier: the found nonedge must either lie within
$a\cup b$ (as in the clique game) or between $a$ and some common
neighbor of~$b$.  We will argue as before, but will use a modified
definition of Bob's vectors~$q_b$.

Namely, say that a nonedge is a \emph{common neighbor} of set
$b\subseteq V$, if both its endpoints are common neighbors of $b$,
that is, are connected (by edges of $G$) to all vertices in~$b$.  Now
define the vector $q_b'$ by: $q_b'(e)=0$ if and only if $e\in E(b)$ or
$e$ is a common neighbor of~$b$.

\begin{lemma}\label{lem:3}
  For every $1\leq k\leq n$, there is a monotone circuit $\Circ(X)$ of
  depth at most $6.3\blog n+O(1)$ such that $\Circ(p_a)=1$ and
  $\Circ(q_b')=0$ for all cliques $a$ and $b$ of size $|a|=k$ and $|b|
  > \is{G}-k$.
\end{lemma}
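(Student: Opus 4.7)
The plan is to reuse the \emph{same} circuit $f_k$ as in the biclique game (Lemma~\ref{lem:1})---namely the OR of the monomials $M_c(X)=\bigwedge_{e\in E(c)}x_e$ over \emph{all} $k$-element subsets $c\subseteq V$, not just cliques. The depth bound $6.3\blog n + O(1)$ then comes for free, since the circuit is literally the one built in Lemma~\ref{lem:1}. So the only real work is to verify that the modified vector $q_b'$ (which now also zeros out nonedges whose both endpoints are common neighbors of $b$) makes the argument go through, even though $c$ is an arbitrary $k$-subset rather than a clique.

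The easy side is $f_k(p_a)=1$: since $|a|=k$, the monomial $M_a$ is present in $f_k$ and clearly $M_a(p_a)=1$. For $f_k(q_b')=0$, I will show that for every $k$-subset $c\subseteq V$ there is a nonedge $e\in E(c)$ with $q_b'(e)=0$, so $M_c(q_b')=0$. I would split into the same cases as in Lemma~\ref{lem:1} and add one extra subcase. If $c\cap b\neq\emptyset$, star-freeness gives a nonedge $e$ incident to a vertex in $c\cap b$, so $e\in E(c)\cap E(b)$ and therefore $q_b'(e)=0$. If $c\cap b=\emptyset$ and some nonedge lies between $c$ and $b$, the same argument as in Lemma~\ref{lem:1} applies verbatim: this nonedge is in $E(b)$, so $q_b'$ kills it.

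The only genuinely new case---and this is where the relaxation is used---is when $c\cap b=\emptyset$ and \emph{no} nonedge lies between $c$ and $b$. Then every vertex of $c$ is adjacent to every vertex of $b$, so $c\subseteq\Gamma(b)$. Since $b$ is a clique, if $c$ were also a clique then $c\cup b$ would be a clique of size $|c|+|b|>\is{G}$, contradicting the definition of $\is{G}$. Hence $c$ is \emph{not} a clique and contains some nonedge $e$; this $e$ is in $E(c)$, and both its endpoints lie in $c\subseteq\Gamma(b)$, so $e$ is a common neighbor nonedge of $b$ and $q_b'(e)=0$ by the new definition. Thus $M_c(q_b')=0$ in all cases.

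I do not expect any real obstacle: the only thing one has to notice is that extending Bob's ``kill set'' from $E(b)$ to ``$E(b)$ plus common-neighbor nonedges'' exactly neutralises the bad case that made the plain $f_k$ fail for the clique game (all nonedges of $c\cup b$ lying inside $c$). Once the three-case analysis above is in place, the depth bound is inherited directly from Lemma~\ref{lem:1}, and the protocol of Theorem~\ref{thm:1} can then be replayed on $(p_a,q_b')$ to prove Theorem~\ref{thm:3}, the final variable $x_e$ reached at the end of the KW-style traversal corresponding to a nonedge $e$ with $p_a(e)=1$ and $q_b'(e)=0$, i.e., a nonedge lying within $a\cup b$ or between $a$ and $\Gamma(b)$, as required.
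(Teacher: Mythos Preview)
Your proposal is correct and follows essentially the same approach as the paper: both reuse the circuit $f_k$ from Lemma~\ref{lem:1} and verify $M_c(q_b')=0$ for every $k$-subset $c$, with the depth bound inherited verbatim. The only cosmetic difference is the case split---the paper distinguishes $E(c)\cap E(b)\neq\emptyset$ versus $E(c)\cap E(b)=\emptyset$, which merges your first two subcases into one---but the substance (star-freeness for the overlap case, and the observation that a nonedge inside $c\subseteq\Gamma(b)$ is a common-neighbor nonedge in the remaining case) is identical.
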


\begin{proof}
  Let $f_k(X)$ be the monotone boolean function defined in the proof
  of Lemma~\ref{lem:1}.  That is, $f_k$ is the OR of monomials
  $M_c(X)=\bigwedge_{e\in E(c)}x_e$ over all $k$-element sets
  $c\subseteq V$. Let $b\subseteq V$ be a clique of size $|b| >
  \is{G}-k$.  It is enough to show that every monomial $M_c$ rejects
  the vector~$q_b'$.  This clearly holds if $E(c)\cap
  E(b)\neq\emptyset$, because $q_b'$ sets to $0$ all variables $x_e$
  with $e\in E(b)$.

  So, assume that $E(c)\cap E(b)=\emptyset$, that is, $c\cap
  b=\emptyset$ and there are no nonedges between $c$ and $b$. Since
  $b$ is a clique, the condition $|c|+|b|>|c|+(\is{G}-k)=\is{G}$
  implies that both endpoints of some nonedge $e$ must belong to $c$.
  But the absence of nonedges between $c$ and $b$ implies that $e$ is
  common neighbor of $b$. Hence, again, the vector $q_b'$ sets the
  variable $x_e$ to $0$, and $M_c(q_b')=0$ holds.

  Thus, $f_k(p_a)=1$ and $f_k(q_b')=0$ for all disjoint cliques $a$
  and $b$ of size $|a|=k$ and $|b| > \is{G}-k$. Since, as shown in the
  proof of Lemma~\ref{lem:1}, the function $f_k$ can be computed by a
  monotone circuits of depth at most $6.3\blog n+O(1)$, we are done.
\end{proof}

The protocol for the relaxed clique game on a graph $G$ is now the
same as for the clique game. As in that game, interesting are only
inputs $(a,b)$, where both $a$ and $b$ are cliques. In this case, the
players take the circuit guaranteed by Lemma~\ref{lem:3}, and traverse
it until they find a nonedge $e=\{u,v\}$ such that $x_e(p_a)=1$ and
$x_e(q_b')=0$. By the definition of vectors $p_a$ and $q_b'$, this
means that one endpoint of $e$, say, vertex $u$ belongs to the clique
$a$, and the second endpoint $v$ either belongs to $b$ or is a common
neighbor of $b$ (because in this latter case the nonedge $e$ must be a
common neighbor of $b$).  In both cases, the nonedge $e$ is a legal
answer in the relaxed clique game.  \qed

\begin{remark}
  Note that if $a$ and $b$ are disjoint cliques such that
  $|a|+|b|>\is{G}$, then there \emph{must} be a ``crossing'' nonedge
  (between $a$ and $b$), which would be a legal answer in the clique
  game. However, the protocol for the relaxed game may output a
  ``wrong'' nonedge---a common neighbor of~$b$.
\end{remark}

\section{Conclusion and open problems}

Note that our communication protocol is not explicit because the
construction of a small-depth monotone circuits for the majority
function in \cite{valiant} is probabilistic.  To get an explicit
protocol, one can use the construction of a circuit of depth $K\blog
n$ for the majority function given in \cite{ajtai}.  But the constant
$K$ resulting from this construction is huge, it is about~$5000$.

The main message of Theorem~\ref{thm:1} is that communication
complexity arguments cannot yield any non-trivial lower bounds on the
length of cutting plane proofs for systems corresponding to the
Maximum Biclique problem, because $\ccb{G}=O(\log n)$ holds for all
$n$-vertex graphs $G$. However, the case of the Maximum Clique problem
remains unclear.  Do $n$-vertex graphs $G$ requiring $\cc{G}\gg \log^2
n$ bits of communication in the clique game exist?  We have only shown
that $\cc{G}=O(\log n)$ holds for a lot of graphs, and that this
number of communicated bits is enough for all graphs in the relaxed
clique game (which is no more related to cutting plane proofs).

Let us mention that a different type of (adversarial) games,
introduced in \cite{PI}, was recently used in \cite{lauria} to derive
strong lower bounds for tree like \emph{resolution} proofs for the
Maximum Clique problem.  Is there some analogue of these games in the
case of cutting plane proofs?

The clique and biclique games on a given graph $G$ are special cases
of a \emph{monotone} Karchmer--Wigderson game \cite{KW}: given a pair
$(A,B)$ of two intersecting subsets of a fixed $n$-element set, find
an element in their intersection $A\cap B$. (In our case we have
$A=E(a)$ and $B=E(b)$.)  In the \emph{non-monotone} game, inputs are
pairs of distinct sets, and the goal is to find an element in the
symmetric difference $A\oplus B:=(A\setminus B)\cup(B\setminus A)$.
It is usually much easier to find an element in the symmetric
difference than in the intersection.  Say, if the players know that
$|A|\neq |B|$, $O(\log n)$ bits are also enough to find an element in
$A\oplus B$~\cite{brodal}.  However, monotone games (with the goal to
find an element in the intersection) usually require much more bits of
communication. For example, the \emph{monotone} game corresponding to
the matching problem requires $\Omega(n)$ bits of communication
\cite{RW}, whereas \cite{brodal} implies that $O(\log n)$ bits are
enough in the \emph{non-monotone} game for this problem.  It is
therefore interesting that, in the biclique game, a logarithmic number
of communicated bits is enough even to find an element in the
intersection~$A\cap B$, not just in~$A\oplus B$.

Finally, it would be interesting to understand the (monotone)
complexity of the induced $k$-clique functions $\iclique{G,k}$, that
is, to prove nontrivial lower bounds on $\Depth{G}{k}$, the smallest
depth of a monotone circuit computing this function for individual
graphs~$G$.  Recall that $\iclique{G,k}$ accepts a set of vertices if
and only if the induced subgraph of $G$ on these vertices contains a
$k$-clique.

The minterms of $\iclique{G,k}$ are $k$-cliques of $G$, and maxterms
are $k$-\emph{clique transversals}, that is, minimal sets of vertices
intersecting all $k$-cliques of $G$. Thus, the result of Karchmer and
Wigderson \cite{KW} implies that $\Depth{G}{k}$ is exactly the
communication complexity of the following game for $\iclique{G,k}$:
Alice gets a $k$-clique, Bob a $k$-clique transversal, and the goal is to find a
common vertex. Theorem~\ref{thm:2} shows that the communication
complexity $\cc{G}$ of the clique game is at most $\Depth{G}{k}$ plus
an additive logarithmic factor.  Does some reasonable converse (up to
an additive $\log ^2 n$ factor) of this inequality hold?  What is
$\Depth{G}{k}$ for random graphs~$G$?

In the communication game for the {\bf NP}-complete problem
$\clique{n,k}$, inputs are pairs $(A,B)$ of subsets of edges (not
vertices) of $K_n$ such that edges in $A$ form a $k$-clique, and edges
in $B$ form a $k$-\emph{coclique}, that is, $B$ consists of $k-1$
vertex-disjoint cliques covering all vertices of $K_n$. The goal is to
find an edge in $A\cap B$. It is known that, for particular choices of
$k=k(n)$, this game requires $\Omega(\sqrt{k}\log n)$ bits
\cite{razborov,AB}, and even $\Omega(n^{1/3})$ bits \cite{GH} of
communication. Can the arguments of \cite{razborov,AB,GH} be adopted
to the game for $\iclique{G,k}$?  The problem in this latter game is
with Bob's inputs: how to find a large family of $k$-clique transversals in $G$
such that only a small fraction of them will contain a fixed set of,
say, $\sqrt{n}$ vertices?  Actually, it is even not clear whether
there exist a sequence $(G_n: n=1,2,\ldots)$ of $n$-vertex graphs
$G_n$ for which $\iclique{G_n,k}$ is an {\bf NP}-complete problem.

\section*{Acknowledgments}

I am thankful to Mario Szegedy for interesting initial discussions on
the biclique game, and to Jacobo Tor\'an for detecting an error in a
previous protocol for the clique game.

\end{document}